\newtheorem{theorem}{Theorem}
\newtheorem{remark}{Remark}
\newtheorem{corollary}{Corollary}
\newcommand{\Expect}{{\rm I\kern-.3em E}}
\newtheorem{example}{{\em Example}}
\newcommand\blfootnote[1]{%
  \begingroup
  \renewcommand\thefootnote{}\footnote{#1}%
  \addtocounter{footnote}{-1}%
  \endgroup
}
\begin{document}
\title{Receiver Caching with Cooperative Transmission in Linear Interference Networks} 
\title{Wyner's Network on Caches: Combining\\ Receiver Caching with a Flexible Backhaul}
\author{%
  \IEEEauthorblockN{Eleftherios Lampiris}
  \IEEEauthorblockA{EURECOM\\
                    Sophia Antipolis, France\\
                    Email: lampiris@eurecom.fr}
\and
  \IEEEauthorblockN{Aly El Gamal}
  \IEEEauthorblockA{Electrical and Computer Engineering Department\\
                    Purdue University\\
                    Email: elgamala@purdue.edu}
\and
  \IEEEauthorblockN{Petros Elia}
  \IEEEauthorblockA{EURECOM\\
                    Sophia Antipolis, France\\
                    Email: elia@eurecom.fr}
}
\maketitle

\begin{abstract}
\blfootnote{This work was supported by the ANR project ECOLOGICAL-BITS-AND-FLOPS.}
	In this work, we study a large linear interference network with an equal number of transmitters and receivers, where each transmitter is connected to two subsequent receivers. Each transmitter has individual access to a backhaul link (fetching the equivalent of $M_{T}$ files), while each receiver can cache a fraction $\gamma$ of the library. We explore the tradeoff between the communication rate, backhaul load, and caching storage by designing algorithms that can harness the benefits of cooperative transmission in partially connected networks, while exploiting the advantages of multicast transmissions attributed to user caching. We show that receiver caching and fetching content from the backhaul are two resources that can simultaneously increase the delivery performance in synergistic ways. 
	Specifically, an interesting outcome of this work is that user caching of a fraction $\gamma$ of the library can increase the per-user Degrees of Freedom (puDoF) by $\gamma$. Further, the results reveal significant savings in the backhaul load, even in the small cache size region. For example, the puDoF achieved using the pair $(M_{T}=8,
	\gamma=0)$ can also be achieved with the pairs $(M_{T}=4,\gamma=0.035)$ and $(M_{T}=2,\gamma=0.1)$, showing that even small caches can provide significant savings in the backhaul load.

	\end{abstract}

\section{Introduction}

The seminal work of \cite{maddah2014fundamental} showed that adding caches at the receivers can significantly reduce the delivery time of a communication network, by making it scalable to the number of users. Specifically, the work in \cite{maddah2014fundamental} studied the wired, single-stream, noiseless bottleneck channel where the transmitter has access to a library of $N$ files and serves the requests of $K$ receivers each equipped with a cache of size equal to $M$ files.
Using a novel pre-fetching and delivery method, the authors showed that the -- normalized -- delivery time of
\begin{align*}
    \mathcal{T}=\frac{K(1-\gamma)}{1+K\gamma}< \frac{1}{\gamma}
\end{align*}
can be achieved, which corresponds to each transmission serving a total of $D_{1}(\gamma)=K\gamma+1$ users ($\gamma\triangleq\frac{M}{N}\in[0,1]$).

The approach of \cite{maddah2014fundamental} was subsequently applied in other settings such as the wired, multi-server network\footnote{It can easily be shown that this network shares the same fundamental properties with the wireless $K$-user Multiple Input Single Output (MISO) Broadcast Channel (BC) with $L$ transmitting antennas.} with $L$ transmitting servers and $K$ receiving, cache-aided nodes \cite{7580630}, and the wireless, cache-aided interference channel  \cite{7857805} with $K_{T}$ transmitting nodes; each partially storing a fraction $\gamma_{T}$ of the library (where $K_{T}\gamma_{T}\triangleq L$). The surprising outcome of these works was that the number of users served per-transmission attributed to precoding, i.e., content being replicated at the transmitter side, and the corresponding number of users served due to \emph{coded transmissions}, i.e., due to content being replicated at the receiving nodes, appeared to be additive, achieving the delay of $\mathcal{T}=\frac{K(1-\gamma)}{L+K\gamma}$, which is translated to a sum Degrees of Freedom\footnote{The DoF are simply the delivery rate at high SNR (in units of \emph{file}, after normalization by $\log(\text{SNR})$). They reflect the total number of users served at a time and are calculated as the total number of information bits that need to be transmitted, divided by the delivery time i.e. $D=\frac{K(1-\gamma)}{T}$. In the same spirit, the per-user DoF are equal to the DoF divided by the number of users.} (DoF) performance equal to
\begin{align*}
    D_{L}(\gamma)=\frac{K(1-\gamma)}{\mathcal{T}}=L+K\gamma
\end{align*}
and thus, the per-user DoF become $d_{L}(\gamma)=\frac{1-\gamma}{\mathcal{T}}=\frac{L}{K}+\gamma.$


Contrarily, transmitter cooperation without receiver caching was shown to offer significant DoF gains in partially connected interference networks, through the use of Zero-Forcing (ZF) based schemes and a topology-aware choice of the downloaded messages at each transmitter~\cite{ElGamal-Veeravalli-Cambridge2018}. In particular, it was shown in~\cite{ElGamal-Veeravalli-ISIT14} and \cite{ElGamal-Annapureddy-Veeravalli-ICC12} that when each message can be downloaded from the backhaul (in each channel use) by an average of $M_T$ transmitters, and a maximum of $M_T$ transmitters, then the per-user DoF equals $\frac{4M_T-1}{4M_T}$ and $\frac{2M_T}{2M_T+1}$, respectively.

In this work, we focus on the setting with $K$ transmitters and $K$ receivers \cite{Wyner}, where each user $k\in\{0,1,...,K-1\}$ is receiving two interfering messages; one from transmitter $k-1$ and another from transmitter $k$.
Users are equipped with caches of normalized size $\gamma\in[0,1)$, and each will request one out of $N$ files. To serve these demands, transmitters are allowed to fetch from the backhaul, at most, the equivalent of $M_{T}$ files, while at the same time they are able to coordinate and perform cooperative transmissions.

\textit{Our interest lies in designing the caching, backhaul fetching and cooperative delivery algorithms that can jointly minimize the delivery time} of a single file to each user. The performance metric that we will use are the per-user DoF achieved upon complete delivery of all files.
 
As a basis for our scheme, we first consider the case with no caching at the receivers, and present a modification of the schemes in \cite{ElGamal-Veeravalli-ISIT14} and \cite{ElGamal-Annapureddy-Veeravalli-ICC12} that is tailored to our system model, i.e., we are allowed to divide a single file into subfiles, and consider the maximum per-transmitter backhaul load constraint required to deliver all files. Then, we add caches at the receivers and show how the insights from the cooperative transmission problem can be combined with the coded transmissions of the cache-aided literature (see \cite{maddah2014fundamental,7580630,7857805,shariatpanahi2017multi,lampiris2018resolving,8374958}) to further increase the DoF and provide savings in the backhaul.

\paragraph*{Related Work}

The work in \cite{7606851} considered a similar setting to ours and designed a caching and delivery policy that led to the characterization of the per-user DoF in large Wyner's networks. However, the authors in \cite{7606851} assumed that each transmitter can only download from the backhaul messages associated to the receivers connected to it. We relax this restriction here, by \textit{allowing transmitters to download any part of any file, as long as the backhaul constraint is respected}, and we show that this added flexibility will lead to superior performance.

Further, the work in \cite{8254141} considered a $K$-user partially connected interference network, where each receiver is connected to $L$ transmitters with succeeding indices, and caching is enabled at both transmitters and receivers. Contrary to the transmitter-side caching approach of \cite{8254141}, here the choice of downloaded content at each transmitter is based on receiver demands. Finally, the works in \cite{8007072,piovano2017generalized,7541657} consider cache-aided networks with imperfect or no Channel State Information at the Transmitters (CSIT). Here, we assume the availability of perfect CSIT, which enables the analysis of the limits of cooperative zero-forcing transmission strategies.

\section{System Model \& Notation}

We assume a set of $K$ transmitters, $\mathcal{K}_{T}\!\triangleq\!\{0, 1, ... K\!-\!1\}$, and a set of $K$ receivers, $\mathcal{K}_{R}\triangleq\{0, 1, ... K-1\}$, where transmitter $k\in\mathcal{K}_{T}$ is connected to receivers $k$ and $k+1$. The received signal at receiver $k+1$ is given by
\begin{equation*}
	y_{k+1}= x_{k+1}+h_{k,k+1}x_{k}+ w_{k+1},
\end{equation*}
where $x_{k}\in\mathbb{C}$ denotes the transmitted signal from transmitter $k$, that satisfies the average power constraint $\mathbb{E}\{\|x_{k}\|^{2}\}\le P$, $h_{k,k+1}\in\mathbb{C}$ denotes the channel realization between transmitter $k$ and receiver $k+1$, while $w_{k+1}$ corresponds to the channel noise, $w_{k+1}\thicksim \mathbb{C}\mathcal{N}(0,1)$.

We assume that the library $\mathcal{F}$ is comprized of $N$ files, each of size $f$ bits. Each receiver is equipped with a cache of $\gamma\cdot N\cdot f$ bits and will request one of the $N$ files, where $\gamma\in[0,1)$. We denote with $W^{r_{k}}\in\mathcal{F}$ the file requested by user $k$. Transmitters are connected by individual links to the backhaul and can each fetch $M_{T}\cdot f$ bits. Communication takes place in two phases. In the first phase, namely \emph{the placement phase}, the caches of the receivers are filled with content in a manner oblivious to future demands, but dependent on the network topology. Then, during the second phase, called the \emph{delivery} phase, each receiver requests one of the $N$ files and each transmitter downloads up to $M_T\cdot f$ bits through its backhaul link, where the backhaul downloads can be dependent on user demands.
The goal is to design the placement of content at the receivers, the fetching policy at the transmitters from the backhaul and the subsequent cooperative transmission in such a way so as to reduce the delivery time of a single file request to each user, for any pair\footnote{While the results are presented for specific pairs $(M_T, \gamma)$, they easily extend to \textit{any} pair using memory sharing, (see \cite{maddah2014fundamental} and App.~\ref{SecMemorySharing}).} $(M_T, \gamma)$. Upon complete delivery of all files, the considered performance metric is the asymptotic per-user Degrees of Freedom (puDoF) (i.e., the DoF normalized by the number of users in large networks), as defined in~\cite{ElGamal-Veeravalli-Cambridge2018}. We use $d(M_T,\gamma)$ to denote the puDoF achieved with a backhaul load $M_T$ and a fractional cache size $\gamma$.

\paragraph*{Notation}

The set of integers is denoted by $\mathbb{N}$. For $n,k\in\mathbb{N}$, we use $[n]_{k}\triangleq n\mod k$ to denote the modulo operation and $\binom{n}{k}$ for the $n$-choose-$k$ function, while for the product operation we follow the convention $\prod_{i=k}^{n}x_{i}=1$, if $k> n$. If ${A}$ is a set, we will denote its cardinality with $|{A}|$, while for any pair of sets $A,B$, we will use $A\setminus B$ to denote the difference set. Finally, we use the symbol $\oplus$ to denote the bit-wise XOR operation and $\mathcal{Z}_{k}$ to denote the cache content of receiver $k\in\mathcal{K}_{R}$. Using a slight abuse of notation, we will denote transmitted messages by the subfile(s) they contain.

\section{Main Results}

\begin{theorem}\label{thm:nocaching}
In the Wyner's network with per-transmitter maximum backhaul load $M_T\cdot f$ bits and no caches at the receivers, the per-user DoF for any $x\in\mathbb{N}$ satisfies the following:
\begin{align}\label{EqNonCache1}
	d\left(\frac{4x^2}{4x-1},\gamma=0\right)&=\frac{4x-1}{4x},\\
	d\left(\frac{x+1}{2},\gamma=0\right)&\geq\frac{2x}{2x+1}.
	\label{EqNonCache2}
\end{align}
\end{theorem}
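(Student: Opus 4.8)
The plan is to read both bounds as the no-caching ($\gamma=0$) specialization of cooperative zero-forcing over the bidiagonal Wyner channel, reusing the message-assignment schemes of \cite{ElGamal-Veeravalli-ISIT14,ElGamal-Annapureddy-Veeravalli-ICC12} but re-accounting the backhaul under a \emph{maximum} per-transmitter constraint. The starting observation is that the channel matrix is lower bidiagonal: receiver $j$ sees its own transmitter $j$ and exactly one interferer, transmitter $j-1$. Delivering a contiguous run of receivers interference-free is therefore equivalent to inverting a banded triangular system, in which each transmitter sends a designed linear combination of the messages requested along the run; a transmitter lying $\ell$ positions into such a zero-forcing chain must have fetched the $\ell$ upstream messages from the backhaul, so that the joint transmission cancels the accumulated interference at every receiver. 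This is precisely why the per-transmitter load can exceed two files even though a transmitter is physically wired to only two receivers. Silencing one transmitter resets the chain and decouples consecutive blocks at the cost of one skipped receiver, and the two target ratios $\tfrac{4x-1}{4x}$ and $\tfrac{2x}{2x+1}$ are exactly the fractions of served receivers per period produced by the two cited schemes; their achievability at the level of puDoF can thus be imported directly.

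The content of the statement is the backhaul figure, and here the plan is to exploit the freedom to split each file into equal subfiles. A single static instance of either scheme loads the transmitters very unevenly -- those deep in a zero-forcing chain fetch many messages, the silenced ones fetch nothing -- so its naive maximum per-transmitter load far exceeds its average. I would therefore run the $P$ cyclic shifts of a period-$P$ scheme, one on each of $P$ equal subfiles, so that over the whole delivery every transmitter occupies every position of the period exactly once. A short calculation then shows that each transmitter fetches an \emph{identical} number of bits, equal to the total per-period backhaul divided by $P$; in other words the balanced maximum equals the period-average of the message-assignment sizes. Evaluating this average -- summing the fetch counts along each zero-forcing chain, respecting the cap of $x$ assignments per message that the second scheme imposes, and dividing by the period -- is the routine bookkeeping that remains, and the claim is that it collapses to $\tfrac{4x^2}{4x-1}$ for the first scheme and to $\tfrac{x+1}{2}$ for the second. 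Because \eqref{EqNonCache2} is stated only as an inequality, achievability of the pair $\left(\tfrac{x+1}{2},\tfrac{2x}{2x+1}\right)$ suffices there and no matching converse is required.

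The equality in \eqref{EqNonCache1} additionally demands a converse: no scheme obeying the maximum backhaul $\tfrac{4x^2}{4x-1}$ may exceed puDoF $\tfrac{4x-1}{4x}$. For this I would adapt the outer-bounding technique of \cite{ElGamal-Veeravalli-Cambridge2018}, restricting attention to a long window of consecutive users and using that each transmitted signal is a deterministic function only of the files that transmitter fetched, so that the total fetch budget in the window caps how many independent demands can be resolved. The delicate point -- and the step I expect to be the main obstacle -- is that a constraint on the \emph{maximum} per-transmitter load is strictly stronger than the \emph{averaged} per-message constraint underlying the prior converse: simply replacing the average by the maximum would only give the weaker bound $\tfrac{16x^2-4x+1}{16x^2}$, which exceeds the target $\tfrac{4x-1}{4x}$ by $\tfrac{1}{16x^2}$. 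Closing this additive gap should force a counting argument that tracks the silenced transmitters and the period boundaries explicitly rather than passing through the average, and making that bound meet the balanced achievable scheme of the previous paragraph is where I expect the real work to lie.
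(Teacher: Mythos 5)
Your achievability plan coincides with the paper's: run the cyclic shifts of the period-$4x$ (resp.\ period-$(2x+1)$) schemes of \cite{ElGamal-Veeravalli-ISIT14} and \cite{ElGamal-Annapureddy-Veeravalli-ICC12} on $4x-1$ (resp.\ $2x$) equal subfiles, so that every transmitter occupies every role of the period exactly once, the maximum per-transmitter load equals the period average, and the bookkeeping gives $4x^2$ subfiles over $4x$ blocks, i.e.\ $M_T=\tfrac{4x^2}{4x-1}$, and $\tfrac{x(x+1)}{2x}=\tfrac{x+1}{2}$ for the second scheme. You are also right that \eqref{EqNonCache2} needs no converse.

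The genuine gap is your treatment of the converse of \eqref{EqNonCache1}, which you explicitly leave open after concluding that importing the average-constraint converse yields only $\tfrac{16x^2-4x+1}{16x^2}$, an excess of $\tfrac{1}{16x^2}$ over the target. The paper's converse is exactly the import you dismiss: any scheme respecting the maximum per-transmitter load also respects the average constraint with the same value, so the bound of \cite{ElGamal-Veeravalli-ISIT14} applies directly, and your residual gap is an artifact of conflating two normalizations of the load. The $M_T$ in the theorem counts files fetched per transmitter over the \emph{entire delivery} of one file to each user, whereas the $B$ in the formula $\tfrac{4B-1}{4B}$ counts the average number of transmitters per message in the one-shot, per-channel-use assignment of \cite{ElGamal-Veeravalli-ISIT14}. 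Since a scheme achieving puDoF $d$ occupies a delivery time proportional to $1/d$, a total budget of $M_T$ files per transmitter translates into an effective one-shot average cooperation order of at most $B=M_T\cdot d$, not $B=M_T$ (for the paper's own scheme, $B=\tfrac{4x^2}{4x-1}\cdot\tfrac{4x-1}{4x}=x$, matching the per-block load of $x$ subfiles). The imported bound $d\le\tfrac{4B-1}{4B}$ then reads $4M_T\,d(1-d)\ge 1$, and at $M_T=\tfrac{4x^2}{4x-1}$ this forces $d\le\tfrac{4x-1}{4x}$ exactly, since
\begin{align*}
16x^2\cdot\frac{4x-1}{4x}\cdot\frac{1}{4x}=4x-1
\end{align*}
and $d(1-d)$ is decreasing for $d>\tfrac12$. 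Hence there is no additive $\tfrac{1}{16x^2}$ gap, no refined counting of silenced transmitters and period boundaries is needed, and the ``real work'' you anticipate does not exist; as written, however, your proposal does not establish the equality in \eqref{EqNonCache1}, because the converse you deem necessary is never constructed.
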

\begin{proof}
     The proof of achievability is based on a modification of the schemes in \cite{ElGamal-Veeravalli-ISIT14} and \cite{ElGamal-Annapureddy-Veeravalli-ICC12}, and is provided in App.~\ref{sec:nocaching}.
	The converse of Eq.~\eqref{EqNonCache1} follows from \cite{ElGamal-Veeravalli-ISIT14}, where it was shown under an average backhaul load constraint. Since any scheme respecting a maximum load constraint is also respecting the average load constraint with the same value, it follows that the result is tight.
\end{proof}

\begin{theorem}\label{thm:caching}
In the Wyner's network with per-transmitter maximum backhaul load $M_{T}\cdot f$ bits and a normalized cache size at each receiver of a fraction $\gamma$ of the library, the per-user DoF of $d(M_{T},\gamma)=1$ can be achieved with the following pairs for any $x\in\mathbb{N}$:
\begin{align}\label{EqCaching1}
	d&\left(\frac{1-\gamma^2}{4\gamma},\frac{1}{2x+1}\right)=1,\\
	d&\left(\frac{1}{4\gamma},\frac{1}{2x}\right)=1.\label{EqCaching2}
\end{align}
\end{theorem}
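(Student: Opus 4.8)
The plan is to prove both points by explicit construction, since $d(M_T,\gamma)=1$ is the largest per-user DoF attainable in this single-antenna network and the statement is therefore purely one of achievability. My starting point is the cooperative zero-forcing scheme behind Theorem~\ref{thm:nocaching}, applied after decomposing the chain into consecutive blocks of $B\triangleq 1/\gamma$ users, i.e. $B=2x+1$ for~\eqref{EqCaching1} and $B=2x$ for~\eqref{EqCaching2}. Run without caches, such a block scheme serves $B-1$ of every $B$ users, sacrificing a single ``guard'' receiver per block so that neighbouring blocks do not interfere; this already gives per-user DoF $\tfrac{B-1}{B}=1-\gamma$. The role of the caches, each of the exact size $\gamma=1/B$, is to recover precisely this one guard user per block, lifting the per-user DoF to $1$ while, as the backhaul count below shows, even lowering $M_T$ relative to the cache-free scheme operating at the same puDoF deficit.

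First I would fix a demand-oblivious, topology-aware placement: split every file into $B$ equal subfiles and fill each receiver's cache by a rule that is periodic in the receiver index modulo $B$. The period is chosen so that $\gamma=1/B$ holds with equality and, crucially, so that the receivers adjacent to each block boundary store exactly the subfiles that would otherwise leak as interference once the guard user is activated. This is what lets the guard user be served without re-coupling the two blocks it separates.

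Next I would specify the backhaul fetching and the cooperative delivery. Within a block the requested (uncached) subfiles are sent by joint zero-forcing precoding across the transmitters, each subfile being fetched by a consecutive set of transmitters whose span is dictated by the receivers at which its beam must be nulled. Since this span grows toward the centre of a block and contracts at the edges, where the cached subfiles absorb the residual interference, summing the assignment-set sizes over one block gives a total backhaul of order $B^2/4$, i.e. an average per transmitter of $M_T\approx B/4$. Carrying out this count exactly yields $M_T=\tfrac{B^2-1}{4B}=\tfrac{1-\gamma^2}{4\gamma}$ for the odd block size $B=2x+1$ and $M_T=\tfrac{B}{4}=\tfrac{1}{4\gamma}$ for the even block size $B=2x$. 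Decodability then follows in the usual way: for generic channel coefficients the zero-forcing constraints inside each block are simultaneously solvable, the cached subfiles cancel the cross-block leakage and resolve any coded (multicast) combinations, and every receiver recovers the $1-\gamma$ fraction of its file that is not cached at unit rate, so that $d(M_T,\gamma)=1$.

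The step I expect to be the main obstacle is the joint design at the block boundaries together with the exactness of the backhaul accounting. One must simultaneously arrange that (i) the zero-forcing beams null at exactly the intended receivers, (ii) the interference that the activated guard user would push into the adjacent block coincides bit-for-bit with content already cached there, and (iii) the summed assignment-set sizes match the claimed $M_T$ with no slack --- this last point being what distinguishes the clean $1/(4\gamma)$ of the even case from the $\tfrac{1-\gamma^2}{4\gamma}$ of the odd case, whose $-\gamma^2$ correction stems from the asymmetric boundary pattern forced by an odd $B$. The two parities therefore have to be handled by separate but parallel arguments. Finally, having pinned down the discrete memory points $\gamma=1/(2x+1)$ and $\gamma=1/(2x)$, all intermediate pairs $(M_T,\gamma)$ are obtained by memory sharing, as already noted in the system model.
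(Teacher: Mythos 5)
There is a genuine gap, and it sits at the heart of your construction: the guard-user mechanism cannot lift the per-user DoF from $1-\gamma$ to $1$. A cache of fractional size $\gamma=1/B$ stores only a $\gamma$-fraction of each (at placement time unknown) file, so the guard receiver still needs its remaining $(1-\gamma)$-fraction delivered over the air; the cache cannot ``recover'' it. Nor does cache-based cancellation at the adjacent block's boundary receiver close the gap: that cancellation is purely defensive (the cancelling receiver obtains nothing new in that slot), and with subfile-indexed placement each receiver caches only one of the $B$ subfile indices, whereas the guard must be sent $B-1$ distinct indices over the course of the scheme. Do the accounting: with unicast zero-forcing delivering at most one subfile (of size $f/B$) per served user per slot, and each user served in $B-1$ out of every $B$ slots (rotating the guard role), the total time to give every user its $B-1$ needed subfiles is exactly one file-unit, yielding $d=(1-\gamma)/\mathcal{T}=1-\gamma$, not $1$. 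The missing idea --- the engine of the paper's proof --- is coded multicast: Alg.~\ref{alg:DeliveryCaching} pairs receivers $k$ and $k+m$ and transmits the XOR $W^{r_{k}}_{[k+m]_{S}}\oplus W^{r_{k+m}}_{[k]_{S}}$, which each of the two receivers resolves using its cached-but-unwanted summand, while zero-forcing chains convey the same XOR to both destinations and one silent transmitter per $2m$ decouples the chain. This factor-two multicast gain is precisely what lets \emph{every} receiver be served in \emph{every} one of the $2x$ slots, so the total time is $(1-\gamma)$ file-units and $d=1$. Your proposal forms no XOR; the phrase about caches ``resolving coded combinations'' appears only as an afterthought, with no transmission structure behind it.

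Two secondary problems follow. First, the paper does not give a separate even-$B$ construction: Eq.~\eqref{EqCaching2} is derived from Eq.~\eqref{EqCaching1} by memory sharing between $\gamma_{1}=\frac{1}{2x-1}$ and $\gamma_{2}=\frac{1}{2x+1}$ (App.~\ref{SecMemorySharing}), and for good reason --- for even $S$ the shift $m=S/2$ degenerates ($[k+m]_{S}=[k-m]_{S}$), so a direct analogue of the delivery-network structure is not obviously available, and your claim that a parallel even-parity argument yields $M_{T}=\frac{1}{4\gamma}$ ``with no slack'' is unsubstantiated. Second, your backhaul heuristic (fetch spans growing toward the block centre, summing to $\approx B^{2}/4$) recovers the right order but not the actual count: in the paper the load is tallied per delivery network, where the two slots of $DN_{m}$ are balanced so that each transmitter fetches exactly $2m$ subfiles, giving $M_{T}=\gamma\sum_{m=1}^{x}2m=\frac{1-\gamma^{2}}{4\gamma}$. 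The block decomposition and the closing memory-sharing remark echo the paper superficially, but the delivery mechanism you describe would not achieve $d=1$.
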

\begin{proof}
	The proof is constructive and presented in Sec. \ref{sec:caching}.
\end{proof}
\begin{corollary}\label{propAdditiveGamma}
    Caching a fraction $\gamma\!=\!\frac{1}{4x},x\in\mathbb{N}$ of the library at the receivers can increase the puDoF by an additive factor $\gamma$, while simultaneously decreasing the backhaul load by a multiplicative factor of $1-\gamma$.
\end{corollary}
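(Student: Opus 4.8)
The plan is to derive the corollary as an immediate consequence of Theorems~\ref{thm:nocaching} and~\ref{thm:caching}, by comparing two matched operating points under the single identification $\gamma=\frac{1}{4x}$. No new scheme is needed; the only care required is to align the index variables of the two theorems so that the \emph{same} $\gamma$ appears in both, and so that the no-caching baseline has puDoF exactly $1-\gamma$.

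First I would fix the no-caching reference point. Invoking Eq.~\eqref{EqNonCache1} of Theorem~\ref{thm:nocaching} with the same $x\in\mathbb{N}$ gives per-user DoF $\frac{4x-1}{4x}$ at maximum backhaul load $M_T=\frac{4x^2}{4x-1}$. Writing $\gamma=\frac{1}{4x}$, the no-caching puDoF is exactly $\frac{4x-1}{4x}=1-\gamma$, so this point serves as the baseline against which the caching gain is measured.

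Next I would produce the matched caching point from Eq.~\eqref{EqCaching2} of Theorem~\ref{thm:caching}. Since that statement holds for every integer index, I apply it at index $2x$, which yields cache fraction $\frac{1}{2\cdot 2x}=\frac{1}{4x}=\gamma$ and achieves puDoF equal to $1$ at backhaul load $M_T=\frac{1}{4\gamma}=x$. Thus, at cache size $\gamma=\frac{1}{4x}$ the network attains the full puDoF of $1$ using only $M_T=x$.

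Finally I would compare the two points. The puDoF rises from $1-\gamma$ to $1$, an additive increase of exactly $\gamma$, establishing the first claim. For the backhaul, the ratio of the caching load to the no-caching load is $\frac{x}{\,4x^2/(4x-1)\,}=\frac{4x-1}{4x}=1-\gamma$, so the load is reduced by the multiplicative factor $1-\gamma$, establishing the second claim simultaneously at the same pair of operating points. I do not expect any genuine obstacle here: the result is purely algebraic once the two theorems are in hand, and the only place to be careful is the index bookkeeping (index $x$ in Theorem~\ref{thm:nocaching} versus index $2x$ in Theorem~\ref{thm:caching}) needed to make both the additive and the multiplicative statements hold for the same $\gamma=\frac{1}{4x}$.
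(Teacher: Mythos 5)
Your proposal is correct and takes essentially the same route as the paper: both proofs combine the no-caching point of Eq.~\eqref{EqNonCache1} (puDoF $\frac{4x-1}{4x}=1-\gamma$ at $M_T=\frac{4x^2}{4x-1}$) with the full-puDoF point of Eq.~\eqref{EqCaching2} at $(M_T,\gamma)=\left(x,\frac{1}{4x}\right)$, and the multiplicative reduction $\frac{x}{4x^2/(4x-1)}=1-\gamma$ is the identical algebra. Your explicit remark about instantiating Eq.~\eqref{EqCaching2} at index $2x$ merely spells out a substitution the paper leaves implicit.
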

\begin{proof}
    The proof makes use of the results from Eq.~\eqref{EqNonCache1} and Eq.~\eqref{EqCaching2}. 
    Starting from a backhaul load of $M_{T}=\frac{4x^2}{4x-1}$, and adding a fractional cache size of $\gamma=\frac{1}{4x}$ at each receiver, the new backhaul load becomes $M'_{T}=\frac{4x-1}{4x}\frac{4x^{2}}{4x-1}=x$. We conclude the proof by observing through Eq.~\eqref{EqCaching2} that the pair $(M'_{T},\gamma)=(x,\frac{1}{4x})$ leads to achieving the full puDoF.
\end{proof}

\begin{remark}
Observing the result in \cite[Theorem $1$]{7606851}, we can see that in order to achieve complete interference mitigation, it is required to have a backhaul load of $M_{T}=2$ and a cache size of $\gamma=\frac{1}{6}$. On the contrary, here we can achieve the maximal puDoF with the backhaul - caching pairs $(M_{T}=2, \gamma=\frac{1}{8})$ and $(M_{T}=\frac{3}{2}, \gamma=\frac{1}{6})$.

\end{remark}
The key factor enabling our result is that we allow for a more flexible backhaul load, instead of restricting each transmitter to download a specific set of messages which, in turn, allows to utilize transmitter cooperation more efficiently.

\section{Placement and Delivery of Files with Caching at the Receivers}\label{sec:caching}

In this section, we describe the scheme leading to the result of Theorem \ref{thm:caching}. We provide the proof of Eq.~\eqref{EqCaching1}, i.e., when the cache size takes values $\gamma=\frac{1}{2x+1},~x\in\mathbb{N}$, while noting that Eq.~\eqref{EqCaching2} would follow by using memory sharing (cf. App.~\ref{SecMemorySharing}).

\subsubsection{Placement Phase}

In the placement phase, each file is subpacketized into $S={1}/{\gamma}$ subfiles, i.e., for every file $W^{n}\in\mathcal{F}$ we have $W^{n}\!\to\! \{W_{0}^{n},  ..., W_{S\!-\!1}^{n}\}$. Users cache according to
\begin{align}\label{EqCacheContent}
	\mathcal{Z}_{k}=\left\{ W^{n}_{[k]_{S}}, ~\forall n\in \{1,2,...,N\}\right\}.
\end{align}

\subsubsection{Delivery Phase}

\begin{algorithm}[t!]\caption{Delivery Phase of the Cache-aided Scheme}\label{alg:DeliveryCaching}
	\For{	$m\in\{1, ..., \frac{1-\gamma}{2\gamma}\}$ (Choose a Delivery Network)}
	{
	\For{$p\in\{0,m\} $ (Choose Slot of Delivery Net)}{
		Transmitter $k:~0\le[k+p]_{2m}< m$ sends:
		\begin{align*}
			x_{k}\!=\!\sum_{i=0}^{[k]_{2m}}\!(-1)^{i} \!\prod_{j=k\!-\!i}^{k-1}h_{j,j+1} W_{[k+m\!-\!i]_{S}}^{r_{k-i}}\oplus W_{[k-i]_{S}}^{r_{k\!+\!m\!-\!i}}
		\end{align*}\\
		Transmitter $k:~m\le[k+p]_{2m}<2m-1$ sends:
		\begin{align*}
		x_{k}\!=\!\sum_{i=[k\!+\!1]_{m}}^{m-1}\!(\!-\!1)^{i}\! \prod_{j\!=\!k-i}^{k\!-\!1}h_{j,j\!+\!1} W_{[k\!+\!m\!-\!i]_{S}}^{r_{k-i}}\!\oplus\! W_{[k-i]_{S}}^{r_{k\!+\!m\!-\!i}}	
		\end{align*}\\
		Transmitter $k:~[k+p]_{2m}=2m-1$ sends:
		\begin{align*}
			x_{k}=\emptyset.
		\end{align*}
		}
	}
\end{algorithm}

As discussed above, the delivery phase starts with the request from each user of \emph{any}\footnote{We will assume that each user requests a different file, which corresponds to the worst case user demand.} file from the library $\mathcal{F}$. For $\gamma=\frac{1}{2x+1},~x\in\mathbb{N}$, the goal is to rely on the smallest possible backhaul load that can allow for interference-free reception i.e., $d(M_{T},\gamma)=1$.

The delivery phase consists of $2x$ transmission slots, where in each slot, we deliver a fraction $\gamma=\frac{1}{2x+1}$ of the requested file to every receiver which, along with the cached fraction will amount to the whole file. We will call each successive pair of delivery slots a \textit{Delivery Network} ($DN_{m},~m\in\{1,2,...,x\}$). Thus, there will be a total of $x$ delivery networks. The role of $DN_{m}$ is to deliver to user $k\in\mathcal{K}_{R}$ subfiles indexed by $[k\pm m]_{S}$.
To this end, during $DN_{m}$ the transmitted messages contain XORs (or linear combinations of XORs), where each XOR is formed using two subfiles with difference of indices equal to $[m]_{S}$. For example, in $DN_{m},~m\in\{1,...,\frac{1-\gamma}{2\gamma}\}$, the two transmitted XORs, intended for user $k\in\mathcal{K}_{R}$, will be
\begin{align*}
	W^{r_{k}}_{[k+m]_{S}}\oplus W^{r_{k+m}}_{[k]_{S}},\ \ \ \ \ \ \ 
	W^{r_{k}}_{[k-m]_{S}}\oplus W^{r_{k-m}}_{[k]_{S}}.
\end{align*}
Transmission takes place according to Alg.~\ref{alg:DeliveryCaching}.
First, we demonstrate how the algorithm succeeds in achieving full puDoF through the following example and subsequently we discuss the mechanics of Alg.~\ref{alg:DeliveryCaching}.

\begin{example}\label{exampleScheme}
	Let us assume that each user can store a fraction $\gamma\!=\!\frac{1}{5}$ of the library, which corresponds to $4$ transmission slots and thus $2$ Delivery Networks, namely $DN_{1}$ and $DN_{2}$. We begin by subpacketizing each file into $5$ subfiles and caching at each user according to Eq.~\eqref{EqCacheContent}, i.e.,
	\begin{align*}
		\mathcal{Z}_{0}=&\left\{W^{n}_{0},~\forall n\in \{1,2,...,N\}\right\},\\
		\mathcal{Z}_{1}=&\left\{W^{n}_{1},~\forall n\in \{1,2,...,N\}\right\},\\
		\vdots\\
		\mathcal{Z}_{5}=&\left\{W^{n}_{0},~\forall n\in \{1,2,...,N\}\right\}.
	\end{align*}
	
		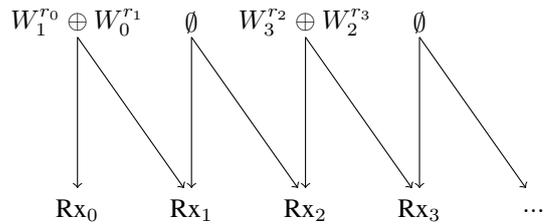
\begin{figure}[thb]
\centerline{
    \begin{tikzpicture}
    \foreach \i in {0,1.5,3,4.5}
    {
    	\draw [->] (10+\i,10) -- (10+\i,8);
    	\draw [->] (10+\i,10) -- (11.4+\i,8);	
	}
	\draw node at (10,10.2) { $W^{r_{0}}_{1}\oplus W^{r_{1}}_{0}$};
	\draw node at (13,10.2) { $W^{r_{2}}_{3}\oplus W^{r_{3}}_{2}$};
	\draw node at (11.5,10.2) { $\emptyset$};
	\draw node at (14.5,10.2) { $\emptyset$};
	\draw node at (10,7.7) { Rx$_{0}$};
	\draw node at (11.5,7.7) { Rx$_{1}$};
	\draw node at (13,7.7) { Rx$_{2}$};
	\draw node at (14.5,7.7) { Rx$_{3}$};
	\draw node at (16,7.7) { $...$};
    \end{tikzpicture}
}
\caption{Slot 1 of Delivery Network $DN_{1}$.}
\label{fig:DeliveryNet1Slot1}
\end{figure}
\begin{figure}[h]
\centerline{
    \begin{tikzpicture}
    \foreach \i in {0,1.5,3,4.5}
    {
    	\draw [->] (10+\i,10) -- (10+\i,8);
    	\draw [->] (10+\i,10) -- (11.4+\i,8);	
	}
	\draw node at (11.5,10.2) { $W^{r_{2}}_{1}\oplus W^{r_{1}}_{2}$};
	\draw node at (14.5,10.2) { $W^{r_{4}}_{3}\oplus W^{r_{3}}_{4}$};
	\draw node at (10,10.2) { $\emptyset$};
	\draw node at (13,10.2) { $\emptyset$};
		\draw node at (10,7.7) { Rx$_{0}$};
	\draw node at (11.5,7.7) { Rx$_{1}$};
	\draw node at (13,7.7) { Rx$_{2}$};
	\draw node at (14.5,7.7) { Rx$_{3}$};
	\draw node at (16,7.7) { $...$};
    \end{tikzpicture}
}
\caption{Slot 2 of Delivery Network $DN_{1}$.}
\label{fig:DeliveryNet1Slot2}
\end{figure}
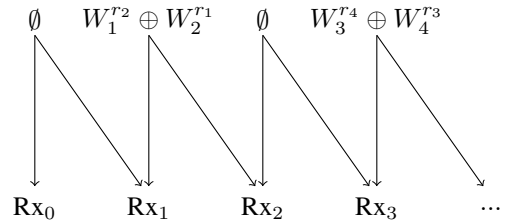
	

	After the request of a single file from each user, the transmission begins with $DN_{1}$ and then with $DN_{2}$. The first pair of transmission slots are responsible for delivering XORs comprized of subfiles with subsequent indices i.e., $W^{r_{0}}_{1}\oplus W^{r_{1}}_{0}$, $W^{r_{1}}_{2}\oplus W^{r_{2}}_{1}$, $W^{r_{2}}_{3}\oplus W^{r_{3}}_{2}$, $W^{r_{3}}_{4}\oplus W^{r_{4}}_{3}$ and so on. The transmitted messages at the first $4$ transmitters during $DN_{1}$ are illustrated in Fig. \ref{fig:DeliveryNet1Slot1}-\ref{fig:DeliveryNet1Slot2}.

The two slots of $DN_{2}$ follow after the completion of the two slots of Delivery Network $DN_{1}$. Here, the transmitters will communicate subfiles to user $k$ with indices $[k\pm2]_{S}$ i.e., $W^{r_{0}}_{2}\oplus W^{r_{2}}_{0}$, $W^{r_{1}}_{3}\oplus W^{r_{3}}_{1}$, $W^{r_{2}}_{4}\oplus W^{r_{4}}_{2}$, $W^{r_{3}}_{0}\oplus W^{r_{5}}_{3}$, and so on. The transmitted messages for each of the two slots are illustrated in Fig.~\ref{fig:DeliveryNet2Slot1}-\ref{fig:DeliveryNet2Slot2}.

In each of the $4$ slots from Delivery Networks $DN_{1}$ and $DN_{2}$, a different subfile is delivered to each receiver, thus completing the delivery of all files\footnote{We note that while $W^{r_0}$ is not completely delivered, asymptotically that does not affect the per-user DoF of a large network.}, while downloading exactly $6$ subfiles at each transmitter.
\end{example}

\paragraph*{Details of the Delivery Algorithm}
First, a delivery network is chosen (Step 1), and then one of the two slots of the delivery network is chosen (Step 2).
As discussed above, the purpose of delivery network $DN_{m}$ is to deliver to each receiver $k\in\mathcal{K}_{R}$ subfiles indexed as $[k \pm m]_{S}$. During each transmission slot, the transmitters are divided into three non-overlapping sets. The first set (Line $3$) is tasked with transmitting new messages and nulling the interference created by the messages of previous transmitters. The second set (Line $4$) is tasked with transmitting messages that nullify the interference at their respective receiver, which interfering messages have been generated by transmitters of the first set. Finally, the third set (Line $5$) of transmitters remains silent.

\subsection*{Characterizing the Required Backhaul Load}

In this section, we will characterize the backhaul load that our algorithm requires in order to achieve interference-free transmission for a given fractional cache $\gamma=\frac{1}{2x+1},~x\in\mathbb{N}$.

We begin by observing (cf. Alg. \ref{alg:DeliveryCaching} and Ex.~\ref{exampleScheme}) that the backhaul load at each transmitter during a specific Delivery Network is -- potentially -- different, and the two slots of a delivery network are designed to balance the per-transmitter backhaul load. As an example, in Fig.~\ref{fig:DeliveryNet2Slot1}-\ref{fig:DeliveryNet2Slot2} we can see that if a transmitter is silent during one slot of $DN_{2}$, then during the other slot it will transmit the linear combination of two XORs, thus will need to fetch from the backhaul $4$ subfiles.

Consider a transmitter that, during Slot $2$ of $DN_{m}$, is silent. This transmitter's index, $k$, (Line 5 of Alg.~\ref{alg:DeliveryCaching}) must satisfy $[k+m]_{2m}=2m-1$, which gives $k=(2b-1)m-1,~b\in\mathbb{N}$. This further means that during Slot $1$ of $DN_{m}$, the transmitter's load will be characterized by Line $3$ of Alg.~\ref{alg:DeliveryCaching}, since $[k]_{2m}=[2bm- m-1]_{2m}=m-1$. Thus, this transmitter will need to fetch the contents of $m$ XORs, making the total, per-delivery-network, backhaul load equal to $2m$ subfiles.
Using this observation, we can calculate the overall required per-transmitter backhaul load, which is (note:  $x=\frac{1-\gamma}{2\gamma}$)
\begin{align*}
	M_{T}=\frac{1}{S}\cdot\sum_{m=1}^{x}2m=\gamma\cdot 2\cdot \frac{x(x+1)}{2}=\frac{1-\gamma^{2}}{4\gamma}.
\end{align*}

\begin{figure}[ht!]
\centering
\includegraphics[width=0.85\columnwidth]{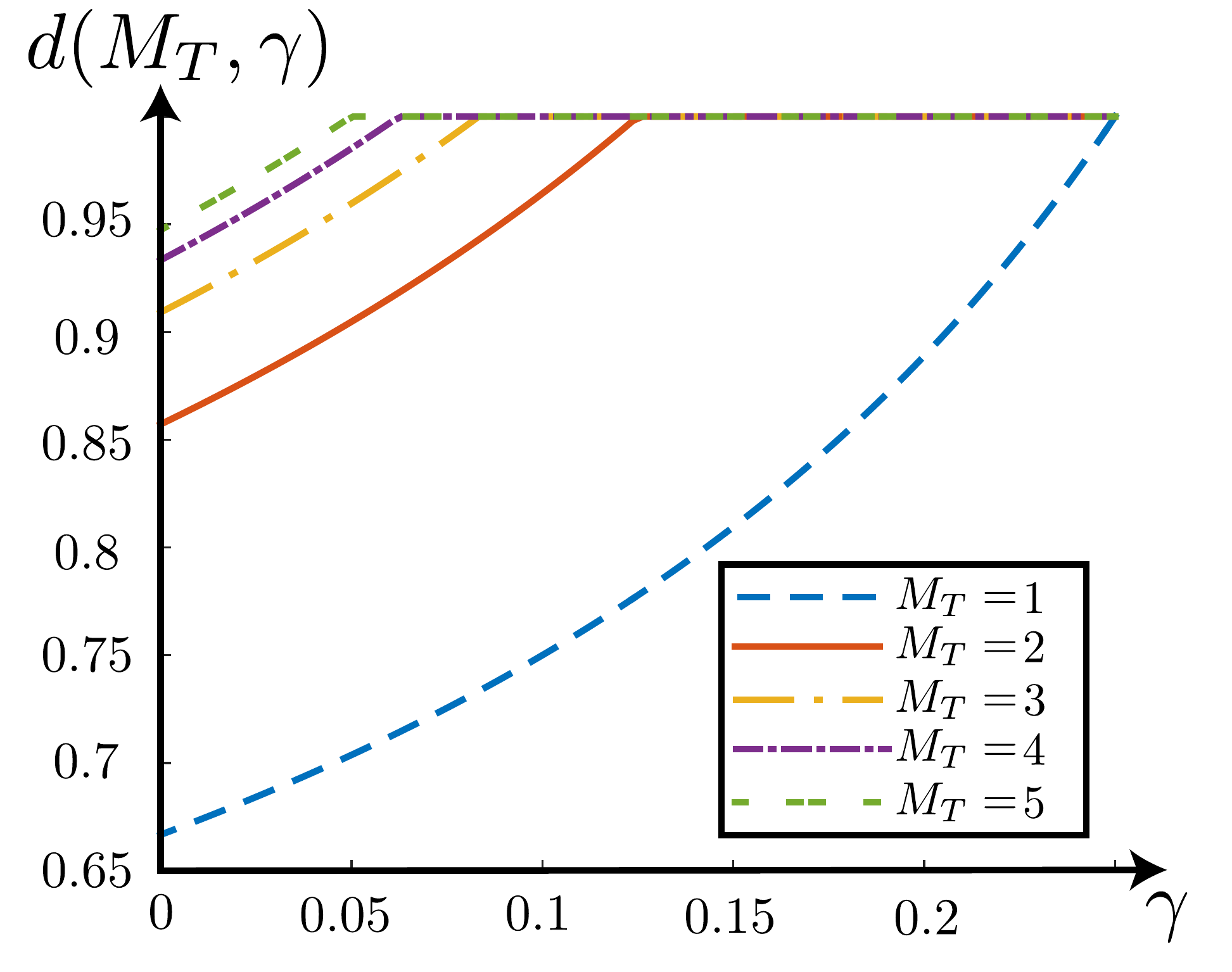}
\caption{Per-user DoF as a function of cache size for different values of backhaul load.}
\label{fig:pudofWithCaching}
\end{figure}
\begin{figure}[ht!]
\centering
\includegraphics[width=0.87\columnwidth]{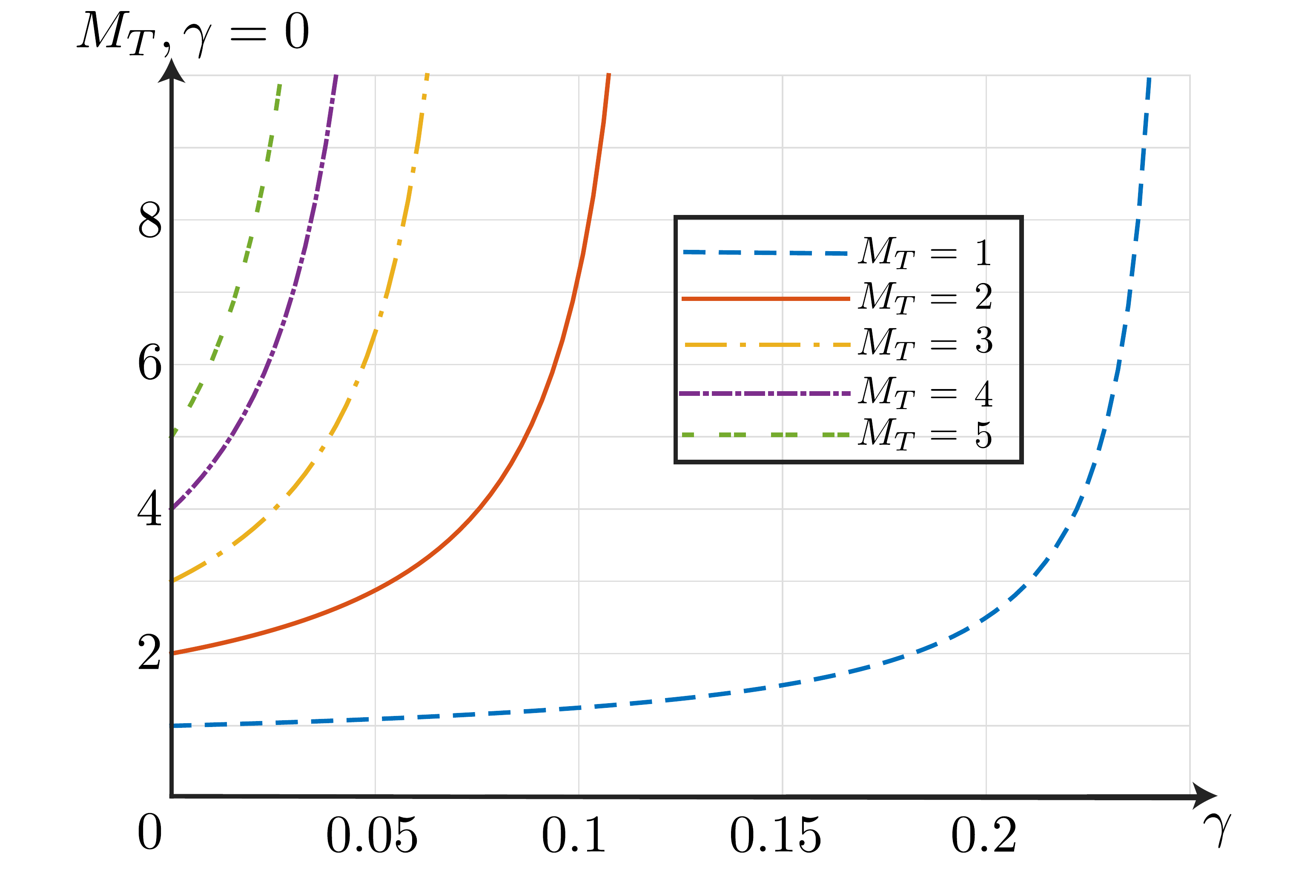}
\caption{Required backhaul load without user caching that achieves the same per-user DoF as the cache-aided scheme with the pair $(M_{T},\gamma)$.}
\label{fig:requiredBackhaulWithoutCaching}
\end{figure}

\section{Discussion and Concluding Remarks}

From Corollary~\ref{propAdditiveGamma}, we can deduce that receiver-side caching impacts the delivery time in three different ways.
\begin{figure*}[t!]
\centerline{
    \begin{tikzpicture}
    \foreach \i in {0,2,4,6, 8, 10, 12}
    {
    	\draw [->] (10+\i,10) -- (10+\i,8);
    	\draw [->] (10+\i,10) -- (11.5+\i,8);	
	}
	\draw node at (9.8,10.6) { $W^{r_{0}}_{2}\!\oplus\! W^{r_{2}}_{0}$};
	\draw node at (14.4,10.6) { $-h_{1,2}(W^{r_{1}}_{3}\!\oplus\! W^{r_{3}}_{1})$};
	\draw node at (12,11) { $-h_{0,1}(W^{r_{2}}_{0}\!\oplus\! W^{r_{0}}_{2})$};
	\draw node at (12,10.2) {$W^{r_{1}}_{3}\!\oplus\! W^{r_{3}}_{1}$};
	\draw node at (16,10.5) { $\emptyset$};
	\draw node at (11.1,9.2) {$h_{0,1}$};
	\draw node at (13.1,9.2) {$h_{1,2}$};
	\draw node at (15.1,9.2) {$h_{2,3}$};
	\draw node at (17.1,9.2) {$h_{3,4}$};
	\draw node at (19.1,9.2) {$h_{4,5}$};
	\draw node at (21.1,9.2) {$h_{5,6}$};
	\draw node at (23.1,9.2) {$...$};
	\draw node at (10,7.7) { Rx$_{0}$};
	\draw node at (12,7.7) { Rx$_{1}$};
	\draw node at (14,7.7) { Rx$_{2}$};
	\draw node at (16,7.7) { Rx$_{3}$};
	\draw node at (18,7.7) { Rx$_{4}$};
	\draw node at (20,7.7) { Rx$_{5}$};
	\draw node at (22,7.7) { Rx$_{6}$};
	\draw node at (23.7,7.7) {$...$};
	\draw node at (17.9,10.6) { $W^{r_{4}}_{1}\!\oplus\! W^{r_{6}}_{4}$};
	\draw node at (22.4,10.6) { $-h_{5,6}(W^{r_{5}}_{2}\!\oplus\! W^{r_{7}}_{0})$};
	\draw node at (20,11) { $-h_{4,5}(W^{r_{4}}_{1}\!\oplus\! W^{r_{6}}_{4})$};
	\draw node at (20,10.2) {$W^{r_{5}}_{2}\!\oplus\! W^{r_{7}}_{0}$};
    \end{tikzpicture}
}
\caption{Slot 1 of Delivery Network $DN_{2}$.}\label{fig:DeliveryNet2Slot1}
\end{figure*}
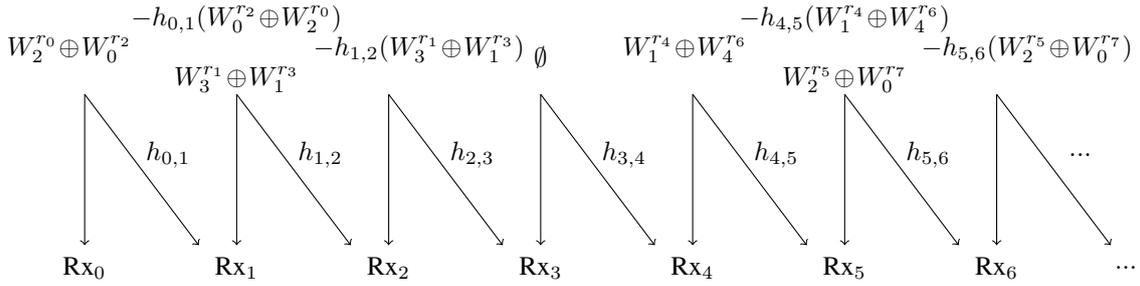
\begin{figure*}[t!]
\centerline{
    \begin{tikzpicture}
    \foreach \i in {0,2,4,6, 8, 10, 12}
    {
    	\draw [->] (10+\i,10) -- (10+\i,8);
    	\draw [->] (10+\i,10) -- (11.5+\i,8);	
	}
	\draw node at (9.8,10.6) { $W^{r_{1}}_{4}$};
	\draw node at (13.7,10.6) { $W^{r_{2}}_{4}\!\oplus\! W^{r_{4}}_{2}$};
	\draw node at (12,10.6) { $\emptyset$};
	\draw node at (15.7,11) { $-h_{2,3}(W^{r_{2}}_{4}\!\oplus\! W^{r_{4}}_{2})$};
	\draw node at (15.7,10.2) { $W^{r_{3}}_{0}\!\oplus\! W^{r_{5}}_{3}$};
	\draw node at (11.1,9.2) {$h_{0,1}$};
	\draw node at (13.1,9.2) {$h_{1,2}$};
	\draw node at (15.1,9.2) {$h_{2,3}$};
	\draw node at (17.1,9.2) {$h_{3,4}$};
	\draw node at (10,7.7) { Rx$_{0}$};
	\draw node at (12,7.7) { Rx$_{1}$};
	\draw node at (14,7.7) { Rx$_{2}$};
	\draw node at (16,7.7) { Rx$_{3}$};
	\draw node at (18,7.7) { Rx$_{4}$};
	\draw node at (20,7.7) { Rx$_{5}$};
	\draw node at (22,7.7) { Rx$_{6}$};
	\draw node at (23.7,7.7) {$...$};
	\draw node at (18.1,10.6) { $-h_{3,4}(W^{r_{3}}_{0}\!\oplus\! W^{r_{5}}_{3})$};
	\draw node at (22,10.6) { $W^{r_{6}}_{3}\!\oplus\! W^{r_{8}}_{1}$};
	\draw node at (20,10.6) { $\emptyset$};
	\draw node at (19.1,9.2) {$h_{4,5}$};
	\draw node at (21.1,9.2) {$h_{5,6}$};
	\draw node at (23.1,9.2) {$...$};
    \end{tikzpicture}
}
\caption{Slot 2 of Delivery Network $DN_{2}$.}\label{fig:DeliveryNet2Slot2}
\end{figure*}
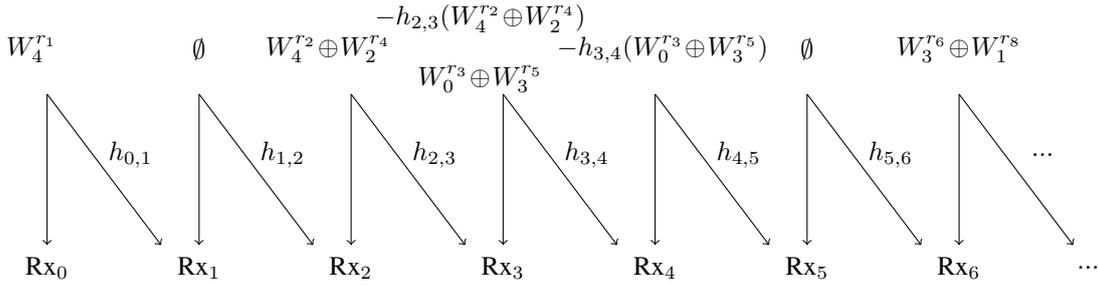

	\paragraph{Local Caching Gain} Having stored a fraction $\gamma$ from each of the files, the system can have reductions in the delivery time since part of the desired content is already stored at the receivers and hence it is not required to be communicated.
	\paragraph{Multicasting Gain} Since messages contain XORed subfiles, in order to decode its desired subfile each receiver needs to make use of its cached but unwanted content. Thus, unwanted, cached content allows the transmission of more than one message simultaneously, which saves transmission slots.
	\paragraph{Cooperative Transmission Gain} As a fraction $\gamma$ of each file is cached at each receiver, the user will require only the smaller fraction $(1-\gamma)$ of the file. Now, for the same backhaul load as the no-caching case, this smaller request (from $1$ to $1-\gamma$) permits the transmitters to fetch more content, which can further boost the cooperation gains.

In Fig. \ref{fig:pudofWithCaching}, we illustrate the above points b) and c) by plotting the puDoF that is achieved using different pairs $(M_T, \gamma)$. It is interesting to note that a high backhaul load paired with a small cache can provide an intereference-free reception at every node.

Further, in Fig. \ref{fig:requiredBackhaulWithoutCaching}, we plot\footnote{The $M_{T}$ values of the $y$-axis are calculated according to the results of Theorem~\ref{thm:nocaching}. While not all of the points presented may be achievable, nevertheless their convex envelope is, and as a result present an even more optimistic case in favor of the no-caching schemes.} the backhaul load that would have been needed to achieve the same per-user DoF as does the pair $(M_{T}, \gamma)$. We can note here that caching even a fraction $\gamma=\frac{1}{20}$ with a backhaul load of $M_{T}=3$ would have otherwise required a no caching backhaul load of $M_{T}=6$. Moreover, caching a fraction $\gamma=0.1$ can reduce the load from $M_{T}\approx 7$ to $M_T=2$. This further accentuates the role of coded transmissions and multicasting as relevant and impactful techniques that allow for fast delivery of content.

On the other hand, in the absence of caching, the cost of increasing the DoF even by a small fraction would have been extremely high. For example, if $M_{T}=\frac{16}{7}$ we know that we can achieve $d=\frac{7}{8}$, but in order to achieve $d'=\frac{15}{16}$, we would have to more than double the backhaul cost (see Eq.~\eqref{EqNonCache1}). Contrarily, the same increase can be achieved by caching at each user an (approximate) fraction $\gamma=\frac{1}{16}$ of the library, and requiring a backhaul load of only $M_T=2$.

To summarize, in this work we have characterized the per-user DoF with no caching for all interger values of the backhaul load, as well as other rational values. We also characterized the backhaul load required for complete interference mitigation with fractional receiver cache sizes that are equal to $\frac{1}{x}$ for every integer value $x>1$. We have demonstrated through the obtained results the effectiveness of receiver caching in the studied setting and how it leads to significant savings in both the delivery time and required backhaul load. Further, we have demonstrated how a flexible allocation of messages over the backhaul leads to significant reductions in both the backhaul load and cache size needed to completely mitigate interference from a DoF perspective.


%
 \appendices

\section{No-Caching Schemes}\label{sec:nocaching}

In this section, we provide the achievable schemes, in the absence of caching, that prove Theorem~\ref{thm:nocaching}. The presented schemes rely on applications of the schemes in~\cite{ElGamal-Veeravalli-ISIT14} and \cite{ElGamal-Annapureddy-Veeravalli-ICC12}\footnote{See also \cite{ElGamal-Veeravalli-Cambridge2018} for a summary and high-level illustration.} with time sharing, in order to meet the considered backhaul constraint.

\begin{algorithm}[tbh]\caption{Delivery Phase Under no Caching corresponding to the result of Eq.~\eqref{EqNonCache1}}\label{alg:DeliveryNoCache1}
Assume $M_{T}=\frac{4x^2}{4x-1},~x\in\mathbb{N}$.\\
Next($i$) returns the smallest index of a subfile of $W^{r_i}$ that has not been transmitted in a previous time slot.\\
Subpacketize each file into $S=4x-1$ subfiles.\\
	\For{	$t\in\{0,1, ..., 4x-1\}$ (Time Slots)}
	{
		Transmitter $k:~0\le[k-t]_{4x}< 2x$ sends:
		\begin{align*}
				x_{k}\!=\!\sum_{i=0}^{[k-t]_{4x}}(-1)^{i} \left(\prod_{j=k-i}^{k-1}h_{j,j+1}\right) W_{\text{Next}(k-i)}^{r_{k-i}}.
		\end{align*}\\
		Transmitter $k:~2x\le[k-t]_{4x}<4x-1$ sends:
		\begin{align*}
				x_{k}=\sum_{i=1}^{2x-L} (-1)^{i-1} \left(\prod_{j=k+1}^{k+i-1}\frac{1}{h_{j-1,j}}\right) W_{\text{Next}(k+i)}^{r_{k+i}},
		\end{align*}
		where $L=[k-t]_{4x}-2x-1$.\\
		Transmitter $k:~[k-t]_{4x}= 4x-1$ does not transmit.
  }
\end{algorithm}
\subsection{Proof of Theorem~\ref{thm:nocaching}, Eq. \eqref{EqNonCache1}}
The proposed scheme is completed in $4x$ blocks of communication. Each receiver gets $4x-1$ packets, and each packet is delivered through a one degree of freedom link. Each file is subpacketized into $4x-1$ subfiles, i.e. file $W^{n}\to \{W_{0}^{n}, W_{1}^{n}, ..., W_{4x-2}^{n}\}$. Each subfile is carried over one packet.
In each block of communication, we divide the network into subnetworks, each consisting of $4x$ consecutive transmitter-receiver pairs, and use the scheme in~\cite{ElGamal-Veeravalli-ISIT14} to deliver $4x-1$ packets in each subnetwork.

In what follows, we explain the scheme for the case when $x=1$ for simplicity, and demonstrate how it generalizes to larger values of $x$.
For the first block of communication when $x=1$, the first transmitter downloads $W_{0}^{r_{0}}$, the second transmitter downloads $W_{0}^{r_{0}}$ and $W_{0}^{r_{1}}$, and the third downloads $W_{0}^{r_{3}}$. All three subfiles $W_{0}^{r_{0}}, W_{0}^{r_{1}}$ and $W_{0}^{r_{3}}$ can then be delivered through one DoF links using cooperative transmission, as illustrated in~\cite{ElGamal-Veeravalli-ISIT14}. The fourth transmitter is inactive, thereby eliminating inter-subnetwork interference, and thus allowing the same scheme to be applied to each remaining subnetwork.

In the second block of communication, the first transmitter is inactive, while the same scheme is applied while we allocate to the second transmitter the role that the first transmitter had in the network, to the third transmitter the role that the second transmitter had and so on. More precisely, the first subnetwork would now consist of users with indices $\{1,2,3,4\}$. Subfile $W_{1}^{r_{1}}$ would then be downloaded by transmitters $1$ and $2$, $W_{0}^{r_{2}}$ would be downloaded by transmitter $2$, and $W_{1}^{r_{4}}$ would be downloaded by transmitter $3$, while the fifth transmitter is deactivated. Since transmitter $5$ is inactive, then there is no inter-subnetwork interference, hence the same scheme can be applied to the subnetwork containing users $\{5,6,7,8\}$ to deliver subfiles $\{W_{1}^{r_{5}}, W_{0}^{r_{6}}, W_{1}^{r_{8}} \}$ without causing interference to the following subnetwork, and similarly, three subfiles can be delivered over one DoF links for every subsequent subnetwork.   
Proceeding in a similar fashion as above for the third block of communication for the case when $x=1$, we are able to deliver all $4x-1=3$ subfiles of each file $W^i$ for almost all files\footnote{In fact, we deliver all files $W_i^{r_{k}}$, whose index $i \geq 4x-1$, but since the focus is on the asymptotic puDoF, then ignoring a small set of users would not affect the result.} in the $4x=4$ communication blocks. The achieved puDoF would then be given by $\frac{4x-1}{4x}=\frac{3}{4}$. For the backhaul load, in each block of communication, each transmitter downloads an average of $x$ subfiles. Over the $4x$ communication blocks, each transmitter downloads $4x^2$ subfiles, resulting in $M_T=\frac{4x^2}{4x-1}$ files. 
%

\begin{algorithm}[tbh]\caption{Delivery Phase Under no Caching corresponding to the result of Eq.~\eqref{EqNonCache2}}\label{alg:DeliveryNoCache}
Assume $M_{T}=\frac{x+1}{2},~x\in\mathbb{N}$.\\
Next($i$) returns the smallest index of a subfile of $W^{r_i}$ that has not been transmitted in a previous time slot.\\
Subpacketize each file into $S=2x$ subfiles.\\
	\For{	$t\in\{0,1, ..., 2x\}$ (Time Slots)}
	{
		Transmitter $k:~0\le[k-t]_{2x+1}< x$ sends:
		\begin{align*}
				x_{k}\!=\!\sum_{i=0}^{[k-t]_{2x+1}}(-1)^{i} \left(\prod_{j=k-i}^{k-1}h_{j,j+1}\right) W_{\text{Next}(k-i)}^{r_{k-i}}.
		\end{align*}\\
		Transmitter $k:~x\le[k-t]_{2x+1}<2x$ sends:
		\begin{align*}
				x_{k}=\sum_{i=1}^{x-L} (-1)^{i-1} \left(\prod_{j=k+1}^{k+i-1}\frac{1}{h_{j-1,j}}\right) W_{\text{Next}(k+i)}^{r_{k+i}},
		\end{align*}
		where $L=[k-t]_{2x+1}-x$.\\
		Transmitter $k:~[k-t]_{2x+1}= 2x$ does not transmit.
  }
\end{algorithm}
\begin{figure*}[t!]
\centerline{
    \begin{tikzpicture}
    \foreach \i in {0,2,4,6, 8, 10, 12, 14,15.7}
    {
    	\draw [->] (10+\i,10) -- (10+\i,8);
    	\draw [->] (10+\i,10) -- (11.5+\i,8);	
	}
	\draw node at (9.8,10.6) { $W^{r_{0}}_0$};
    \draw node at (11.8,10.6) { $W^{r_{0}}_0, W^{r_{1}}_0$};
	\draw node at (13.7,11) { $W^{r_{0}}_0,W^{r_{1}}_0$};
    \draw node at (13.9,10.2) { $W^{r_{2}}_0$};
	\draw node at (15.7,11) { $W^{r_{0}}_0, W^{r_{1}}_0$};
	\draw node at (15.7,10.2) { $W^{r_{2}}_0,W^{r_{3}}_0$};
	\draw node at (11.1,9.2) {$h_{0,1}$};
	\draw node at (13.1,9.2) {$h_{1,2}$};
	\draw node at (15.1,9.2) {$h_{2,3}$};
	\draw node at (17.1,9.2) {$h_{3,4}$};
	\draw node at (10,7.7) { Rx$_{0}$};
	\draw node at (12,7.7) { Rx$_{1}$};
	\draw node at (14,7.7) { Rx$_{2}$};
	\draw node at (16,7.7) { Rx$_{3}$};
	\draw node at (18,7.7) { Rx$_{4}$};
	\draw node at (20,7.7) { Rx$_{5}$};
	\draw node at (22,7.7) { Rx$_{6}$};
	\draw node at (24,7.7) { Rx$_{7}$};
    \draw node at (25.7,7.7) {Rx$_{8}$};
	\draw node at (27.4,7.7) {$...$};
    \draw node at (18.1,11) { $W^{r_{5}}_0, W^{r_{6}}_0$ };
 	\draw node at (18.1,10.2) {$W^{r_{7}}_0, W^{r_{8}}_0$};
 	\draw node at (20.4,10.7) {$W^{r_{6}}_0, W^{r_{7}}_0,W^{r_{8}}_0$};
 	\draw node at (22,10.2) { $W^{r_{7}}_0,W^{r_{8}}_0$};
 	\draw node at (24,10.2) { $W^{r_{8}}_0$};
 	\draw node at (25.7,10.2) {$\emptyset$};
	\draw node at (19.1,9.2) {$h_{4,5}$};
	\draw node at (21.1,9.2) {$h_{5,6}$};
	\draw node at (23.1,9.2) {$h_{6,7}$};
   	\draw node at (25.1,9.2) {$h_{7,8}$};
    \draw node at (27.1,9.2) {$...$};
    \end{tikzpicture}
}
\caption{One transmission slot in the non-cache-aided network with backhaul constraint of $M_{T}=\frac{5}{2}$ corresponding to Eq. \eqref{EqNonCache2}. This slot involves a subnetwork of $9$ users and delivers $8$ packets (to all users apart from the $4^{\text{th}}$), thus achieving a pudDoF of $d\left(\frac{5}{2}, 0\right)=\frac{8}{9}$.
}
\label{fig:mttwo}
\end{figure*}
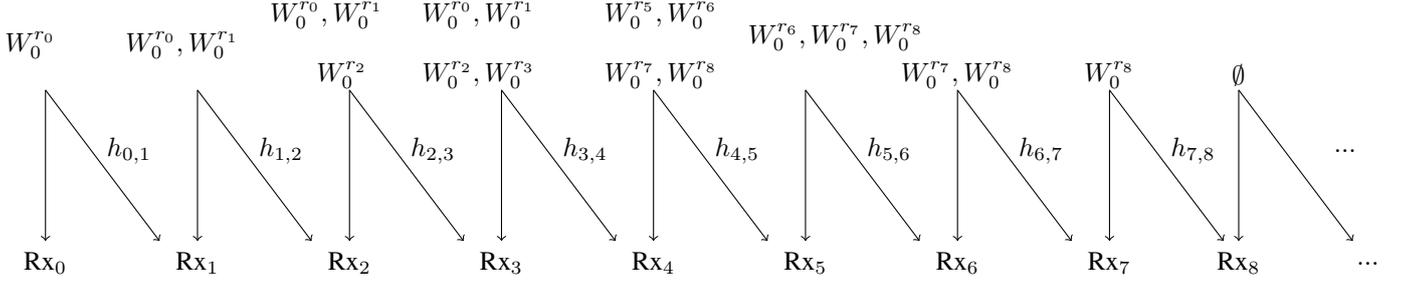
\subsection{Proof of Theorem~\ref{thm:nocaching}, Eq. \eqref{EqNonCache2}}

We explain in this section how the scheme presented in~\cite{ElGamal-Annapureddy-Veeravalli-ICC12} can be modified to prove the result in Theorem~\ref{thm:nocaching}, Eq.~\eqref{EqNonCache2}. The key idea is to employ time sharing for the scheme that achieves the same puDoF when the backhaul allows for distributing a message to a maximum of $x$ transmitters.

Similar to the above proof, the proposed scheme completes in $2x+1$ communication blocks, and each file is subpacketized into $2x$ subfiles, and each subfile is delivered through a one DoF link. In each communication block, the network is split into subnetworks, where each has $2x+1$ consecutive transmitter-receiver pairs.

Consider the case when $x=2$. In the proposed scheme, subfile $W_{0}^{r_{0}}$ is communicated between the first transmitter-receiver pair with no interference. The same subfile is also downloaded by the second transmitter (with index $1$) to cancel its interference at receiver $1$. Subfile $W_{0}^{r_{1}}$ is then delivered through transmitter $1$ to receiver $1$. Similarly, transmitter $3$ delivers $W_{0}^{r_{4}}$ to the last receiver in the first subnetwork, and transmitter $2$ downloads the same subfile to cancel its interference at receiver $3$. Finally, transmitter $2$ delivers $W_{0}^{r_{3}}$ to receiver $3$ with no interference. Note that $W_{0}^{r_{3}}$ is not transmitted in the first block of communication. Also, the last transmitter in the subnetwork is inactive to eliminate inter-subnetwork interference.

In each communication block, a total of $x(x+1)$ subfiles are downloaded from the backhaul for each subnetwork of $2x+1$ users. Upon the conclusion of all $2x+1$ communication blocks, each transmitter has downloaded an equal number of subfiles from the backhaul. Since each file has $2x$ subfiles, the per-transmitter backhaul load $M_T$ is given by,
\begin{equation}
    M_T=\frac{x(x+1)}{2x}=\frac{x+1}{2}.
\end{equation}

\section{Memory Sharing}\label{SecMemorySharing}
In this appendix, we describe the memory sharing concept, which we first use to prove the result of Th.~\ref{thm:caching}, Eq.~\eqref{EqCaching2}, i.e, the required backhaul load under the assumption of complete interference mitigation and a fractional cache size $\gamma=\frac{1}{2k}$, and we further use this result to calculate the puDoF of any pair $\left( M_{T}\in\mathbb{N}, \gamma<\frac{1}{4M_{T}}  \right)$.

The main idea of memory sharing is to split each file into two parts and cache from each part in an uneven manner. We begin by splitting each file $W^{n}\in\mathcal{F}$ into parts i.e., $W^{n}\to \{W^{n,1},W^{n,2}\}$ with respective sizes $|W^{n,1}|=p\cdot |W^{n}|$ and $|W^{n,2}|=(1-p)\cdot |W^{n}|$, where $p\in[0,1]$. We proceed to cache, at each user, a fraction $\gamma_{1}=\frac{1}{2x-1}$ from the first part of each file and a fraction $\gamma_{2}=\frac{1}{2x+1}$ from the second part of each file, which means that the cache constraint must satisfy
\begin{align}
    \gamma&=p\cdot\gamma_{1}+(1-p)\cdot\gamma_{2},\\
    \frac{1}{2x}&=p\frac{1}{2x-1}+(1-p)\frac{1}{2x+1}, \\ p&=\frac{2x-1}{4x}.
\end{align}

Then, using the result of Eq.~\eqref{EqCaching1}, for each of the two parts, we can calculate the total required backhaul load as
\begin{align}
    M_{T}&=p \frac{1-\gamma_{1}^{2}}{4\gamma_{1}} +(1-p)\frac{1-\gamma_{2}^{2}}{4\gamma_{2}}\\\nonumber
    &=\frac{2x-1}{4x}\frac{4(2x)(2x-2)}{2x-1}+\frac{2x+1}{4x}\frac{4(2x)(2x+2)}{2x+1}\ \ &\\\nonumber 
    &=8x=\frac{1}{4\gamma}.&\square
\end{align}

Further, in order to calculate the puDoF for an arbitrary $\gamma<\frac{1}{4M_{T}}$ that is paired with an integer-valued backhaul load, $M_{T}\in\mathbb{N}$, we follow the same procedure of splitting the file into two parts, where now the fractional cache sizes chosen for each part take the values $\gamma_{1}=\frac{1}{4M_{T}}$ and $\gamma_{2}=0$, respectively, thus $p$ can be computed by solving
\begin{align}
    \gamma=p\gamma_{1}+(1-p)\gamma_{2} \Rightarrow p=4 \gamma M_{T},
\end{align}
so that the memory cache constraint is respected. The puDoF when we transmit each part is going to be, respectively, $d(M_{T},\gamma_{1})=1$ (cf. Eq.~\eqref{EqCaching2}) and $d(M_{T},0)=\frac{4M_{T}-2}{4M_{T}-1}$ (cf. Eq.~\eqref{EqNonCache2}), thus the time required to serve all demands would be
\begin{align*}
	\mathcal{T}=p\frac{1-\gamma_{1}}{d(M_{T},\gamma_{1})}+(1-p)\frac{1-\gamma_{2}}{d(M_{T},0)},
\end{align*}
from which we can calculate the achievable per-user DoF as
\begin{align*}
	d\left(M_{T},\gamma\right)=\frac{1-\gamma}{\mathcal{T}}.
\end{align*}

\end{document}